\newtheorem{lemma}{Lemma}
\begin{document}
\begin{center}
	{\Large\textbf{Ordered Exponential and Its Features\\ in Yang--Mills Effective Action}}
	\vspace{0.5cm}
	
	{\large A.~V.~Ivanov$^\dag$ and N.~V.~Kharuk$^\ddag$}
	
	\vspace{0.5cm}
	
	{\it St. Petersburg Department of Steklov Mathematical Institute of
		Russian Academy of Sciences,}\\{\it 27 Fontanka, St. Petersburg 191023, Russia}\\
	{\it Leonhard Euler International Mathematical Institute, 10 Pesochnaya nab.,}\\
	{\it St. Petersburg 197022, Russia}\\
	$^\dag${\it E-mail: regul1@mail.ru}~~~~$^\ddag${\it E-mail: natakharuk@mail.ru}	
\end{center}
%\vskip 10mm
\date{\vskip 20mm}
%\begin{flushright}
%	\Large{{\calligra Dedicated to our parents}}
%\end{flushright}
%\vskip 10mm
\begin{abstract}
In this paper we discuss some non-trivial relations for ordered exponentials on smooth Riemannian manifolds. As an example of application, we study a dependence of the four-dimensional quantum Yang--Mills effective action on the background filed and gauge transformations. Also, we formulate some open questions about a structure of divergences.
\end{abstract}
\vskip 5mm
\small
\noindent\textbf{Key words and phrases:} Yang--Mills theory, effective action, ordered exponential, heat kernel, Green's function, infrared asymptotics, explicit cutoff, Synge's world function
\normalsize
\tableofcontents

\section{Introduction}
Ordered exponentials play a crucial role in mathematical and  theoretical physics, see \cite{Li,pol}, because they have explicit geometrical and physical meanings. Indeed, on the one hand side, they appear naturally in the differential geometry, see formulae (10.13a) and (10.17) in \cite{2}, as a solution 
for the "parallel transport" equation on principle bundles. On the other hand side, they allow us to make a transition to the Fock--Schwinger gauge condition in the non-Abelian gauge theories \cite{Shore}, and, hence, they are elements of the gauge transformation group. Also, formally, from the mathematical point of view, the ordered exponential is a product integral, which appears in different applications quite frequently \cite{Sl}. Of course, there are a lot of ways to apply the exponentials and an ordering itself, including the theory of integrable models \cite{co}, the theory of the heat kernel method \cite{31,33,333}, and many others \cite{bo}.

In this work, we present some non-trivial properties of the ordered exponential on a smooth Riemannian manifold in the case of a compact semisimple Lie group, see \cite{Shore,33,H}. We apply these properties to the pure four-dimensional Yang--Mills theory \cite{1,4,3} to make some useful conclusions about the properties of the effective action regarding its dependence on the background field and gauge transformations. We show that actually the action depends on the field stress tensor and its covariant derivatives. Also, we formulate several crucial questions about the structure of divergences, partially mentioned in \cite{21,22,Ivanov-2018}.

The paper has the following structure. In Section \ref{17-08-22:sec:ord}, we give the problem statement and formulate basic properties of ordered exponentials. Then, in Section \ref{17-08-22:sec:add}, we present some additional non-trivial properties and prove them. In Section \ref{17-08-22:sec:appl}, we discuss the application of the obtained equalities to the four-dimensional Yang--Mills effective action and formulate actual problems for further work. In Conclusion, we give some useful remarks and discuss the text of the work and alternative proofs.

\section{Ordered exponentials}
\label{17-08-22:sec:ord}
Let $G$ be a compact semisimple Lie group, and $\mathfrak{g}$ is its Lie algebra, see \cite{H}.
Let $t^a$ be the generators of the algebra $\mathfrak{g}$, where $a=1,\ldots,\dim\mathfrak{g}$,
such that the relations hold
\begin{equation}\label{constdef}
[t^a,t^b]=f^{abc}t^c,\,\,\,\,\,\,\mathrm{tr}(t^at^b)=-2\delta^{ab},
\end{equation}
where $f^{abc}$ are antisymmetric structure constants for $\mathfrak{g}$, and '$\mathrm{tr}$' is the Killing form. We work with an adjoint representation. It is easy to verify that the structure constants have the following crucial property
\begin{equation}\label{constprop}
f^{abc}f^{aef}=f^{abf}f^{aec}-f^{acf}f^{aeb}.
\end{equation}

Let $(M,g)$ be a smooth Riemannian manifold of dimension $\dim M=d\in\mathbb{N}$. Then, we introduce two elements $x,y\in U$, where $U\subset M$ is a smooth convex open domain, and Greek letters denote the coordinate components. Of course, we assume that the metric components form the symmetric matrix-valued operator, so we have $g^{\mu\nu}(x)=g^{\nu\mu}(x)$. The corresponding local formula for the Christoffel symbols of the second kind can be written as
\begin{equation}\label{deffs1}
\Gamma^{\rho}_{\mu\nu}(x)=\frac{g^{\rho\alpha}(x)}{2}
\bigg(\frac{\partial g_{\alpha\mu}(x)}{\partial x^{\nu}}+
\frac{\partial g_{\alpha\nu}(x)}{\partial x^{\mu}}-
\frac{\partial g_{\mu\nu}(x)}{\partial x^{\alpha}}\bigg).
\end{equation}

Further, by symbol $B^{\phantom{a}}_\mu(x)=B^a_\mu(x)t^a$, where $B^{\phantom{a}}_\mu(\cdot)\in C^{\infty}(U,\mathfrak{g})$ for all values of $\mu$, we define the components of a Yang--Mills connection 1-form, see \cite{2}. The operator $B^{\phantom{a}}_\mu(x)$ as an element of the Lie algebra acts by a commutator according to the adjoint representation. Hence, from now we consider $B^{\phantom{a}}_\mu(x)$ as the matrix-valued operator with the components $f^{adb}B^{d}_\mu(x)$.

For the purposes of the section we need to introduce several additional objects. First of all, we describe a number of derivatives. Let $h_{\nu_1\ldots\nu_n}^{\mu_1\ldots\mu_k}(\cdot)\in C^1(U,\mathfrak{g})$ be a tensor-valued operator, and $h_{\nu_1\ldots\nu_n}^{\mu_1\ldots\mu_k\,ab}(x)$ be its matrix components at the point $x$, then we define
\begin{equation}\label{deffs2}
\nabla_{x^\rho}h_{\nu_1\ldots\nu_n}^{\mu_1\ldots\mu_k}(x)=
\partial_{x^\rho}h_{\nu_1\ldots\nu_n}^{\mu_1\ldots\mu_k}(x)+
\sum_{i=1}^k\Gamma_{\rho\alpha}^{\mu_i}(x)h_{\nu_1\ldots\ldots..\nu_n}^{\mu_1\ldots\alpha\ldots\mu_k}(x)-
\sum_{i=1}^n\Gamma_{\rho\nu_i}^{\alpha}(x)h_{\nu_1\ldots\alpha\ldots\nu_n}^{\mu_1\ldots\ldots..\mu_k}(x),
\end{equation}
where $\alpha$ stands instead of the corresponding $i$-th index, and
\begin{align}\label{deffs}
	\overrightarrow{D}^{ab}_{x^{\rho}}h_{\nu_1\ldots\nu_n}^{\mu_1\ldots\mu_k\,bc}(x)&=\nabla_{x^{\rho}}^{\phantom{a}}h_{\nu_1\ldots\nu_n}^{\mu_1\ldots\mu_k\,ac}(x)+f^{adb}B^d_\rho(x)h_{\nu_1\ldots\nu_n}^{\mu_1\ldots\mu_k\,bc}(x),\\
	\label{deffs3}
	h_{\nu_1\ldots\nu_n}^{\mu_1\ldots\mu_k\,ab}(x)\overleftarrow{D}^{bc}_{x^{\rho}}&=\nabla_{x^{\rho}}^{\phantom{a}}h_{\nu_1\ldots\nu_n}^{\mu_1\ldots\mu_k\,ac}(x)-h_{\nu_1\ldots\nu_n}^{\mu_1\ldots\mu_k\,ab}(x)f^{bdc}B^d_\rho(x).
\end{align}

The main object of the paper, as it was noted in the introduction, is a path-ordered exponential. Let us give an appropriate definition by the following formula
\begin{multline}\label{expdef}
	\Phi^{ab}(x,y)=\delta^{ab}+\sum_{k=1}^{+\infty}(-1)^k
	\int_{u_0}^{u_1}ds_1\ldots\int_{u_0}^{s_{k-1}}ds_k\,
	\dot{\gamma}^{\mu_1}(s_1)
	\Big(f^{ad_1c_1}B^{d_1}_{\mu_1}(\gamma(s_1))\Big)\\\ldots\,
	\dot{\gamma}^{\mu_k}(s_k)
	\Big(f^{c_{k-1}d_kb}B^{d_k}_{\mu_k}(\gamma(s_k))\Big),
\end{multline}
where $\gamma(\cdot):[u_0,u_1]\to U$ is the parameterized geodesic, see \cite{pol}, such that $\gamma(u_0)=y$ and $\gamma(u_1)=x$, and the dot $\dot{\gamma}$ denotes the derivative $d\gamma(s)/ds$ by the parameter of parametrization. We note that the geodesic satisfies the following differential equation
\begin{equation}\label{deffs4}
\ddot{\gamma}^\rho(s)+\Gamma_{\mu\nu}^{\rho}(\gamma(s))\dot{\gamma}^\mu(s)
\dot{\gamma}^\nu(s)=0\,\,\,
\mbox{for all}\,\,\,s\in[u_0,u_1].
\end{equation}

Such type of operators have some useful properties, which can be formulated in the form
\begin{equation}\label{expprop}
	\Phi^{ab}(x,z)\Phi^{bc}(z,y)=\Phi^{ac}(x,y),\,\,
	(\Phi^{-1})^{ab}(x,y)=\Phi^{ab}(y,x)=\Phi^{ba}(x,y),\,\,
	\Phi^{ab}(y,y)=\delta^{ab},
\end{equation}
where the points $x,y,z\in U$ lie on the same geodesic. In other words, it means there is such a point $s\in\mathbb{R}$ that the equality $\gamma(s)=z$ holds. The proofs of the properties described above can be found in \cite{Shore,31,33}.

Then, we need to introduce Synge's world function $\sigma(x,y)$, see \cite{104}, as half the square of the geodesic length from $x$ to $y$. It is symmetric two-point scalar. Let us define some convenient abbreviations,
suggested in Synge's monograph \cite{1040}, as follows
\begin{align}
\label{deffs5}
\sigma_\rho(x,y)&=\partial_{x^\rho}\sigma(x,y),\,
\phantom{\partial_{y^{\rho^\prime}}\sigma^{\rho^\prime}}
\sigma^\rho(x,y)=g^{\rho\nu}(x)\sigma_\nu(x,y),\\
\label{deffs6}
\sigma_{\rho^\prime}(x,y)&=\partial_{y^{\rho^\prime}}\sigma(x,y),\,
\phantom{\partial_{x^\rho}\sigma^\rho}
\sigma^{\rho^\prime}(x,y)=\sigma_{\nu^\prime}(x,y)g^{\nu^\prime\rho^\prime}(y),
\end{align}
where the symbol "$\prime$\," means that we work with the use of the second argument of a two-point function. Let us note that if $U$ is a domain of $\mathbb{R}^d$ with $g^{\mu\nu}(x)=\delta^{\mu\nu}$, then Synge's world function $\sigma(x,y)$ is equal to $|x-y|^2/2$.

In the rest of the paper, in all derivatives (\ref{deffs2})--(\ref{deffs3}) we write $\rho$ and $\rho^\prime$ instead of $x^{\rho}$ and $y^{\rho^\prime}$, respectively, because it does not cause any confusion.
Therefore, we can formulate differential equations for exponential (\ref{expdef}) in the form
\begin{equation}\label{phieq}
	\sigma^\rho(x,y)\overrightarrow{D}^{ab}_{\rho^{\phantom{\prime}}}\Phi^{bc}(x,y)=0\,\,\,\mbox{and}\,\,\,\,
	\Phi^{ab}(x,y)\overleftarrow{D}^{bc}_{\rho^\prime}\sigma^{\rho^{\prime}}(x,y)=0.
\end{equation}

The proof can be achieved by the direct differentiation of (\ref{expdef}) and the integration by parts, see the papers mentioned above.

\section{Additional properties}
\label{17-08-22:sec:add}
In this section we present a number of formulae for representing the path-ordered exponential and some additional properties, which are useful in gauge theories. We start with two series representations, which actually generalize the covariant Taylor representation from \cite{30} on the case of non-zero Yang--Mills connection components.
\begin{lemma}\label{exprepr1}
Let $x,y\in U$, and $1$ denotes the unit function.
Then, under the conditions described above, we have
	\begin{align}\label{exprepr2}
		\Phi^{ab}(x,y)&=\delta^{ab}+\sum_{k=1}^{+\infty}\frac{(-1)^k}{k!}\sigma^{\mu_1^\prime}(x,y)\ldots\sigma^{\mu_k^\prime}(x,y)
		\Big(1\overleftarrow{D}^{ac_1}_{\mu_1^\prime}
		\overleftarrow{D}^{c_1c_2}_{\mu_2^\prime}\ldots
		\overleftarrow{D}^{c_{k-1}b}_{\mu_k^\prime}\Big),\\\label{exprepr3}
		\Phi^{ab}(x,y)&=\delta^{ab}+\sum_{k=1}^{+\infty}\frac{(-1)^k}{k!}
		\sigma^{\mu_k}(x,y)\ldots\sigma^{\mu_1}(x,y)
		\Big(\overrightarrow{D}^{ac_{k-1}}_{\mu_k^{\phantom{\prime}}}\ldots
		\overrightarrow{D}^{c_2c_1}_{\mu_2^{\phantom{\prime}}}
		\overrightarrow{D}^{c_{1}b}_{\mu_1^{\phantom{\prime}}}1\Big).
	\end{align}
\end{lemma}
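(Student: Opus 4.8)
The plan is to prove (\ref{exprepr2}) and (\ref{exprepr3}) not by manipulating the iterated integrals in (\ref{expdef}) directly, but by showing that the right-hand sides solve the same first-order transport problem that characterizes $\Phi^{ab}(x,y)$, namely the equations (\ref{phieq}) together with the coincidence value $\Phi^{ab}(y,y)=\delta^{ab}$ from (\ref{expprop}). Since $U$ is convex, every point $x$ is joined to $y$ by a unique geodesic, so a solution of (\ref{phieq}) with the prescribed coincidence limit is unique; hence it suffices to verify that the proposed series satisfy the transport equation and reduce to $\delta^{ab}$ at $x=y$. The coincidence limit is immediate: at $x=y$ one has $\sigma^{\mu'}(y,y)=0$, so every term with $k\ge 1$ drops out and only $\delta^{ab}$ survives.

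For the transport equation I would focus on (\ref{exprepr2}), abbreviate its right-hand side by $R^{ab}(x,y)$, and write its coefficient functions recursively as $C^{ab}_{\mu_1'\ldots\mu_k'}(y)=1\overleftarrow{D}^{ac_1}_{\mu_1'}\ldots\overleftarrow{D}^{c_{k-1}b}_{\mu_k'}$, so that $C_{\mu_1'\ldots\mu_k'}=C_{\mu_1'\ldots\mu_{k-1}'}\overleftarrow{D}_{\mu_k'}$. Since the factors $\sigma^{\mu_i'}(x,y)$ carry no algebra indices, the operator $\overleftarrow{D}_{\rho'}$ acts on $R^{ab}$ by the Leibniz rule, its gauge part passing through the $\sigma$'s and hitting only the coefficient; this produces two contributions, one in which $\nabla_{\rho'}$ falls on the product of $\sigma$'s and one in which the extra derivative is absorbed into the coefficient as $C_{\mu_1'\ldots\mu_k'}\overleftarrow{D}_{\rho'}=C_{\mu_1'\ldots\mu_k'\rho'}$. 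The decisive step is the contraction with $\sigma^{\rho'}(x,y)$: here I would invoke the standard identity for Synge's world function $\sigma^{\rho'}\nabla_{\rho'}\sigma^{\mu'}=\sigma^{\mu'}$ (the primed analogue of $\sigma^{\rho}\nabla_{\rho}\sigma^{\mu}=\sigma^{\mu}$), which makes $\sigma^{\rho'}\nabla_{\rho'}$ act as the degree operator on the monomials, $\sigma^{\rho'}\nabla_{\rho'}(\sigma^{\mu_1'}\ldots\sigma^{\mu_k'})=k\,\sigma^{\mu_1'}\ldots\sigma^{\mu_k'}$. After this substitution the first contribution becomes $\sum_{k\ge 1}\frac{(-1)^k}{(k-1)!}\sigma^{\mu_1'}\ldots\sigma^{\mu_k'}C^{ac}_{\mu_1'\ldots\mu_k'}$, while the second, after shifting the summation index by one, becomes exactly its negative, so the two cancel term by term and $R^{ab}\overleftarrow{D}^{bc}_{\rho'}\sigma^{\rho'}=0$.

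The formula (\ref{exprepr3}) follows by the mirror argument, using instead the left equation in (\ref{phieq}), $\sigma^{\rho}\overrightarrow{D}_{\rho}\Phi=0$, the unprimed identity $\sigma^{\rho}\nabla_{\rho}\sigma^{\mu}=\sigma^{\mu}$, and the coefficients built up on the left as $\overrightarrow{D}_{\mu_k}\ldots\overrightarrow{D}_{\mu_1}1$; alternatively one may deduce it from (\ref{exprepr2}) through the symmetry $\Phi^{ab}(x,y)=\Phi^{ba}(y,x)$ in (\ref{expprop}). The routine parts — the Leibniz rule for $\overleftarrow{D}$ on the scalar--algebra product and the index bookkeeping in the cancellation — are mechanical. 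I expect the two genuine points requiring care to be, first, the justification that the series converge on the convex domain $U$, so that the termwise differentiation and reindexing are legitimate, and second, the uniqueness statement itself: because $\sigma^{\rho'}$ vanishes at coincidence, the transport equation has a regular singular point there, so uniqueness should be argued by matching the degree-homogeneous parts order by order — precisely the graded structure exposed by the identity $\sigma^{\rho'}\nabla_{\rho'}\sigma^{\mu'}=\sigma^{\mu'}$ — rather than by a naive appeal to the Picard--Lindel\"of theorem.
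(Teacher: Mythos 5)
Your proposal is correct, but it takes a genuinely different and more economical route than the paper's own proof. The paper also begins with the coincidence limit and disposes of the second ($y$-side) equation of (\ref{phieq}) ``instantly'' via the Synge identity (\ref{deffs7}) --- this is exactly your degree-operator cancellation. However, the paper then verifies the first ($x$-side) equation as well, and that is where nearly all of its labour lies: because that derivative acts at $x$ while the coefficients of (\ref{exprepr2}) are built at $y$, the authors first prove the covariant Taylor expansion (\ref{expanB}) of $\sigma^\rho(x,y)B^d_\rho(x)$ in powers of $\sigma^{\mu'}(x,y)$ by the geodesic formalism (\ref{deffs8})--(\ref{deffs11}), and then run the symmetrization cancellation (\ref{recrel1}). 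You bypass all of this by checking only the $y$-side equation and closing the argument with uniqueness of the transport problem; this is legitimate --- the second equation plus the coincidence value does pin down $\Phi$ --- and it makes explicit a uniqueness step that the paper leaves implicit (verifying both equations does not remove the need for it). Concerning your caution about the degenerate point: the situation is milder than a genuine regular singular point, because $\sigma^{\rho'}$ multiplies the \emph{whole} equation (both the $\nabla_{\rho'}$ term and the $B_{\rho'}$ term); restricting to a geodesic through $x$ and dividing out this scalar factor leaves a regular linear ODE with coefficients smooth up to coincidence, and uniqueness among continuous solutions then follows from a backward Gronwall estimate, so the Picard--Lindel\"of-style argument does work after this reduction, whereas your order-by-order alternative identifies the two sides only as asymptotic series. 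What the paper's longer route buys is the expansion (\ref{expanB}) itself, which is reused later in the paper (in the remark after Lemma \ref{lem2lem} relating expansions in $\sigma^{\rho}$ and $\sigma^{\rho'}$, and implicitly in the proof of Lemma \ref{mainlem}, which expands $\Phi$ in powers of $\sigma^\rho$); your derivation of (\ref{exprepr3}) from (\ref{exprepr2}) via $\Phi^{ab}(x,y)=\Phi^{ba}(y,x)$ and $f^{abc}=-f^{cba}$ coincides with the paper's. Finally, both your argument and the paper's manipulate the series termwise, which for general smooth (non-analytic) $B_\mu$ is rigorous only at the level of formal/asymptotic expansions; you at least flag this convergence caveat, while the paper does not.
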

\begin{proof}
	It is obvious that the initial condition $\Phi^{ab}(y,y)=\delta^{ab}$ holds, because $\sigma^{\rho}(x,y)$ and $\sigma^{\rho^\prime}(x,y)$ tend to zero when $x\to y$. Let us check that the right hand side of (\ref{exprepr2}) satisfies both equations from (\ref{phieq}). We start with the second one. 
	Let us apply the operator $\overleftarrow{D}^{bc}_{\rho^\prime}\sigma^{\rho^\prime}(x,y)$. 
	Due to the fact that the derivative acts at the point $y$, we get the answer instantly after using the following additional property of Synge's world function
\begin{equation}\label{deffs7}
\sigma^{\rho^\prime}(x,y)\partial_{y^{\rho^\prime}}\sigma^{\mu_i^\prime}(x,y)
=\sigma^{\mu_i^\prime}(x,y)+
\sigma^{\rho^\prime}(x,y)\Gamma_{\rho^\prime\nu^\prime}^{\mu_i^\prime}(y)\sigma^{\nu^\prime}(x,y),
\end{equation}
which actually follows from differentiation of $2\sigma(x,y)=\sigma^{\rho^\prime}(x,y)\sigma_{\rho^\prime}(x,y)$, see formula (2.31) in \cite{1040}.
	
Let us move on to the first equation from (\ref{phieq}). In the case we have the derivative at the point $x$. So we are going to use the covariant Taylor expansion in the form
	\begin{equation}\label{expanB}
		\sigma^{\rho}(x,y)B_\rho^d(x)=\sum_{k=1}^{+\infty}\frac{(-1)^k}{(k-1)!}
		\sigma^{\mu_k^\prime}(x,y)\ldots\sigma^{\mu_1^\prime}(x,y)
		\nabla_{\mu_k^\prime}^{\phantom{a}}\ldots\nabla_{\mu_2^\prime}^{\phantom{a}}B^d_{\mu_1^\prime}(y).
	\end{equation}
It does not follow from ordinary covariant expansion, because the last formula contains both, $\sigma^{\rho}(x,y)$ and $\sigma^{\rho^\prime}(x,y)$, and we need to decompose the first function in terms of another one. The simplest way to do this is to use the formalism of geodesic lines, successfully applied to prove the covariant expansions, see appendix in \cite{30}. First of all, let us rewrite the derivatives of Synge's world function as
\begin{equation}\label{deffs8}
\sigma^{\rho}(x,y)=(u_1-u_0)\dot{\gamma}^\rho(u_1)
,\,\,\,
\sigma^{\rho^\prime}(x,y)=-
(u_1-u_0)\dot{\gamma}^{\rho^\prime}(u_0).
\end{equation}
Hence, using the standard Taylor expansion, we get the chain of equalities
\begin{align}\label{deffs9}
\sigma^{\rho}(x,y)B_\rho^d(x)&=(u_1-u_0)\dot{\gamma}^\rho(u_1)B_\rho^d(\gamma(u_1))\\\label{deffs10}&=
\sum_{k=1}^{+\infty}\frac{(u_1-u_0)^k}{(k-1)!}\bigg(\frac{d^{k-1}}{dt^{k-1}}
\dot{\gamma}^\rho(t)B_\rho^d(\gamma(t))\bigg)\bigg|_{t=u_0}.
\end{align}
The expression in the large parentheses can be transformed in the manner
(an analog of (4.112) from \cite{30})
\begin{equation}\label{deffs11}
\bigg(\frac{d^{k-1}}{dt^{k-1}}
\dot{\gamma}^\rho(t)B_\rho^d(\gamma(t))\bigg)\bigg|_{t=u_0}=
\dot{\gamma}^{\mu_k^\prime}(u_0)\ldots\dot{\gamma}^{\mu_1^\prime}(u_0)
\nabla_{\mu_k^\prime}^{\phantom{a}}\ldots\nabla_{\mu_2^\prime}^{\phantom{a}}B^d_{\mu_1^\prime}(y),
\end{equation}
which can be proved by mathematical induction with the usage of (\ref{deffs4}).
Thereby, we obtain statement (\ref{expanB}) after applying the second relation from (\ref{deffs8}).

Further, after performing the differentiation of the right hand side of (\ref{exprepr2}), using the last equality and one additional property of Synge's world function
\begin{equation}\label{deffs12}
\sigma^{\rho}(x,y)\partial_{y^{\rho}}\sigma^{\mu_i^\prime}(x,y)
=\sigma^{\mu_i^\prime}(x,y),
\end{equation}
we get a number of relations for each degree of $\sigma^{\mu^\prime}(x,y)$
	\begin{multline}\label{recrel1}
		\sum_{k=1}^{+\infty}\frac{(-1)^k\sigma^{\mu_1^\prime}(x,y)\ldots\sigma^{\mu_k^\prime}(x,y)}{(k-1)!}
		\Bigg(1\overleftarrow{D}^{ac_1}_{\mu_1^\prime}
		\overleftarrow{D}^{c_1c_2}_{\mu_2^\prime}\ldots
		\overleftarrow{D}^{c_{k-1}c}_{\mu_k^\prime}+
		\nabla_{\mu_{k}^\prime}^{\phantom{a}}\ldots\nabla_{\mu_2^\prime}^{\phantom{a}}f^{adc}B^d_{\mu_1^\prime}(y)\\+
		\sum_{n=1}^{k-1}
		C_{n}^{k-1}
		\Big(\nabla^{\phantom{a}}_{\mu_{k}^\prime}\ldots\nabla^{\phantom{a}}_{\mu_{n+2}^\prime}f^{ade}B^d_{\mu_{n+1}^\prime}(y)\Big)
		\Big(1\overleftarrow{D}^{ec_1}_{\mu_1^\prime}
		\overleftarrow{D}^{c_1c_2}_{\mu_2^\prime}\ldots
		\overleftarrow{D}^{c_{k-1}c}_{\mu_n^\prime}
		\Big)\Bigg).
	\end{multline}
The last sum is equal to zero. Indeed, the factor $\sigma^{\mu_1^\prime}(x,y)\ldots\sigma^{\mu_k^\prime}(x,y)$ leads to the symmetrization of the tensor in large parentheses. This means, in particular, that we can change the order of derivatives. Hence, the first term cancels the others after factorizing of the first degree of the Yang--Mills connection.

Therefore, we have obtained the first statement of the lemma. The second equality follows from the first one, the second property from formula (\ref{expprop}), the following permutations $a,x\leftrightarrow b,y$, and the relation $f^{abc}=-f^{cba}$.
\end{proof}

Notice one more representation for the path-ordered exponential.
\begin{lemma}\label{lem2lem}
Let $x,y\in U$, and $1$ denotes the unit function. Also, $N_r(x,y)$ and $N_l(x,y)$ denote matrix-valued operators $\sigma^\rho(x,y)\overrightarrow{D}^{ab}_{\rho^{\phantom{\prime}}}$ and
$\overleftarrow{D}^{ab}_{\rho^\prime}\sigma^{\rho^{\prime}}(x,y)$, respectively.
Then, under the conditions described above, we have
\begin{equation}\label{newrep1}
\Phi(x,y)=\lim_{t\to1-0}
\bigg(e^{N_r(x,y)\ln(1-t)}\,1\bigg)=
\lim_{t\to1-0}
\bigg(1\,e^{N_l(x,y)\ln(1-t)}\bigg).
\end{equation}
\end{lemma}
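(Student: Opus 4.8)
The plan is to build everything on the covariant series representation (\ref{exprepr3}) already proved in Lemma \ref{exprepr1}. Suppressing the adjoint indices, write the general term of that series as
\[
T_k=\sigma^{\mu_k}(x,y)\cdots\sigma^{\mu_1}(x,y)\,\big(\overrightarrow{D}_{\mu_k}\cdots\overrightarrow{D}_{\mu_1}1\big),\qquad T_0=1,
\]
so that (\ref{exprepr3}) reads $\Phi(x,y)=\sum_{k=0}^{+\infty}\frac{(-1)^k}{k!}T_k$. The block $T_k$ is a Greek-scalar (all vector indices are contracted) and is homogeneous of degree $k$ in the factors $\sigma^{\mu}(x,y)$. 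The central point is that $N_r(x,y)=\sigma^{\rho}(x,y)\overrightarrow{D}_{\rho}$ acts on these blocks as a diagonal-plus-shift operator, which makes the exponential $e^{N_r\ln(1-t)}1$ explicitly summable.

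First I would establish the recurrence
\[
N_r\,T_k=k\,T_k+T_{k+1},\qquad N_r\,T_0=T_1 .
\]
Applying $\overrightarrow{D}_{\rho}$ through the Leibniz rule splits the result into the part where the derivative hits the scalar factors $\sigma^{\mu_i}(x,y)$ and the part where it prepends one more covariant derivative. For the first part I would use the standard Synge world-function identity $\sigma^{\rho}(x,y)\nabla_{\rho}\sigma^{\mu}(x,y)=\sigma^{\mu}(x,y)$ (of the same type as the relations (\ref{deffs7}) and (\ref{deffs12}) used in the proof of Lemma \ref{exprepr1}, obtained by differentiating $\sigma^{\rho}\sigma_{\rho}=2\sigma$ and using the symmetry of $\sigma_{;\rho\mu}$); each of the $k$ factors is reproduced, giving $k\,T_k$. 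For the second part, the fresh index is contracted against the remaining $\sigma^{\rho}(x,y)$, which is exactly $T_{k+1}$. The Christoffel contributions of $\overrightarrow{D}_\rho$ acting on the internal Greek indices cancel in the contraction since $T_k$ is a Greek-scalar, so no higher derivatives of $\sigma$ ever appear.

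Given the recurrence, I would set $e^{N_r\lambda}1=\sum_k c_k(\lambda)T_k$, which forces the triangular system $\dot c_k=k\,c_k+c_{k-1}$ with $c_k(0)=\delta_{k0}$, and then verify directly that $c_k(\lambda)=(e^{\lambda}-1)^k/k!$ solves it, since $\dot c_k=e^{\lambda}(e^{\lambda}-1)^{k-1}/(k-1)!=k\,c_k+c_{k-1}$. Putting $\lambda=\ln(1-t)$ gives $c_k=(-t)^k/k!$, hence $e^{N_r\ln(1-t)}1=\sum_k\frac{(-t)^k}{k!}T_k$, and letting $t\to1-0$ recovers $\sum_k\frac{(-1)^k}{k!}T_k=\Phi(x,y)$ by (\ref{exprepr3}). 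The left identity in (\ref{newrep1}) is the mirror image: one repeats the argument for (\ref{exprepr2}) with the blocks $\widetilde T_k=\sigma^{\mu_1^{\prime}}(x,y)\cdots\sigma^{\mu_k^{\prime}}(x,y)\,(1\,\overleftarrow{D}_{\mu_1^{\prime}}\cdots\overleftarrow{D}_{\mu_k^{\prime}})$ and the primed Synge relation, obtaining $\widetilde T_k\,N_l=k\,\widetilde T_k+\widetilde T_{k+1}$.

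The only step needing genuine care is the passage to the limit $t\to1-0$ underneath the infinite sum. For $\lambda=\ln(1-t)\le0$ one has $|e^{\lambda}-1|=1-e^{\lambda}\in[0,1)$, so $|c_k(\lambda)|\le 1/k!$ uniformly in $t\in[0,1)$; since $\sum_k\frac{1}{k!}\,T_k$ converges on the convex domain $U$ — this is precisely the (absolute) convergence of the representation (\ref{exprepr3}) established in Lemma \ref{exprepr1} — a Weierstrass/dominated-convergence estimate lets the termwise limit pass inside, yielding $\Phi(x,y)$. I expect this analytic bookkeeping, rather than the algebra of the recurrence, to be the main obstacle.
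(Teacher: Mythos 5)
Your proof is correct and takes essentially the same route as the paper: your recurrence $N_rT_k=kT_k+T_{k+1}$ is precisely the paper's key identity (\ref{newrep2}) in disguise (it states $T_{k+1}=(N_r-k)T_k$, i.e. $T_k=\prod_{i=0}^{k-1}(N_r-i)\,1$), and both arguments then recognize $e^{N_r\ln(1-t)}\,1$ as the Abel-regularized series $\sum_k\frac{(-t)^k}{k!}T_k$ and let $t\to1-0$. The only difference is packaging of the summation step: you solve the triangular ODE system $\dot c_k=kc_k+c_{k-1}$, while the paper invokes the generalized binomial series $\sum_{k}t^k\,\Gamma(k-N_r)\big(\Gamma(k+1)\Gamma(-N_r)\big)^{-1}1=(1-t)^{N_r}1$, and your dominated-convergence bound $|c_k(\lambda)|\leqslant 1/k!$ stands in for the paper's implicit use of Abel's limit theorem.
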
	
\begin{proof}
For simplicity we work with matrix-valued operators. Then, using property (\ref{deffs7}), we can write the following relation
\begin{equation}\label{newrep2}
\sigma^{\mu_k}(x,y)...\sigma^{\mu_1}(x,y)\Big(\overrightarrow{D}_{x^{\mu_1}}...\overrightarrow{D}_{x^{\mu_k}}\Big)=
\prod_{i=0}^{k-1}
\big(N_r(x,y)-i\big).
\end{equation}
It is obtained with the use of mathematical induction. Applying the change $x\leftrightarrow y$ and transposition of the matrices, we can get the relation for the left operators. Let us note, that the right hand side of (\ref{newrep2}) contains the product of commutative (with each other) operators.
	
Hence, we can rewrite the following chain of equalities for representation (\ref{exprepr3})
\begin{align}\label{newrep3}
\Phi(x,y)&=\sum_{k=0}^{+\infty}\frac{1}{k!}
\Bigg(\prod_{i=0}^{k-1}\big(i-N_r(x,y)\big)\Bigg)1
=\sum_{k=0}^{+\infty}\frac{\Gamma\big(k-N_r(x,y)\big)}{\Gamma\big(k+1\big)\Gamma\big(-N_r(x,y)\big)}\,1
\\\label{newrep4}&=\lim_{t\to1-0}
\Bigg(\sum_{k=0}^{+\infty}\frac{t^k\,\Gamma\big(k-N_r(x,y)\big)}{\Gamma\big(k+1\big)\Gamma\big(-N_r(x,y)\big)}\,1\Bigg)=
\lim_{t\to1-0}
\bigg((1-t)^{N_r(x,y)}\,1\bigg),
\end{align}
from which we obtain the first relation of the lemma. The second one follows from the change $x,r\leftrightarrow y,l$ and the matrix transposition.
\end{proof}	

From statements (\ref{exprepr2}) and (\ref{exprepr3}) it follows that for smooth connection components we have a covariant Taylor expansion for the path-ordered exponential (\ref{expdef}). Moreover, it allows us to write out the answer for the ordered exponential in terms of the field $B^d_\mu$ and its covariant derivatives. At the same time we need to emphasize that the first ordinary derivatives (left and right) of Synge's world function are uniquely related to each other with the usage of covariant Taylor series, see formula (\ref{expanB}) for $B^d_\rho(x)=\delta_{\rho i}$ with fixed $i\in\{1,\ldots,d\}$. Hence, the degrees of $\sigma^{\rho}(x,y)$ and degrees of $\sigma^{\rho^\prime}(x,y)$ lead to equivalent expansions for two-point functions, like the degrees of $(x-y)^\rho$ in $\mathbb{R}^d$.

Now we are ready to formulate the next relation.
\begin{lemma}\label{mainlem}
Let $x,y\in U$. Then, under the conditions described above, we have
	\begin{equation}\label{expprop2}
		f^{abc}\Phi^{ae}(x,y)\Phi^{bd}(x,y)\Phi^{cg}(x,y)=f^{edg},
	\end{equation}
	\begin{equation}\label{expprop3}
		\Phi^{a_1d}(x,y)\Phi^{a_2c}(x,y)f^{a_1ba_2}f^{a_3ba_4}\Phi^{a_3e}(x,y)\Phi^{a_4g}(x,y)
		=f^{dbc}f^{ebg}.
	\end{equation}
\end{lemma}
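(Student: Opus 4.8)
The two identities are two faces of a single structural fact: the totally antisymmetric tensor $f^{abc}$ is invariant under the adjoint action of $G$, and $\Phi(x,y)$ is precisely an element of that group acting in the adjoint representation. Hence $\Phi$ must preserve every tensor assembled from copies of $f^{abc}$, and both the right-hand side $f^{edg}$ of (\ref{expprop2}) and the right-hand side $f^{dbc}f^{ebg}$ of (\ref{expprop3}) are such tensors. My plan is therefore to prove (\ref{expprop2}) directly from the defining differential equation (\ref{phieq}), and then to deduce (\ref{expprop3}) from it by pure index algebra.

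For (\ref{expprop2}) I would fix $y$ and regard $F^{edg}(x)=f^{abc}\Phi^{ae}(x,y)\Phi^{bd}(x,y)\Phi^{cg}(x,y)$ as a function of $x$. The initial condition is immediate from (\ref{expprop}): at $x=y$ we have $\Phi^{ab}(y,y)=\delta^{ab}$, so $F^{edg}(y)=f^{edg}$. To propagate this, I would contract the coordinate derivative with $\sigma^\rho(x,y)$ and use the first equation of (\ref{phieq}), rewritten with the help of (\ref{deffs}) (and $\nabla_\rho\Phi=\partial_\rho\Phi$, since $\Phi$ carries no coordinate indices) as
\[
\sigma^\rho(x,y)\,\partial_\rho\Phi^{ae}(x,y)=-\,\sigma^\rho(x,y)\,f^{amb}B^m_\rho(x)\,\Phi^{be}(x,y).
\]
Setting $\beta^m=\sigma^\rho(x,y)B^m_\rho(x)$ and applying the Leibniz rule to the three factors of $F^{edg}$, one factors out $\Phi^{pe}\Phi^{qd}\Phi^{rg}$ and is left with the scalar coefficient $\beta^m\big(f^{mps}f^{sqr}+f^{mqs}f^{psr}+f^{mrs}f^{pqs}\big)$. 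This bracket is exactly the infinitesimal adjoint variation of the invariant tensor $f^{pqr}$; it vanishes by the Jacobi identity in the form (\ref{constprop}). Thus $\sigma^\rho\partial_\rho F^{edg}=0$. Since by (\ref{deffs8}) the vector $\sigma^\rho(x,y)$ is tangent to the geodesic issuing from $y$, this says that $F^{edg}$ is constant along every such geodesic; the convexity of $U$ then carries the value $f^{edg}$ at $x=y$ to all of $U$, proving (\ref{expprop2}).

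For (\ref{expprop3}) I would first recast (\ref{expprop2}) in a covariance form by contracting it with one additional exponential and using the orthogonality $\Phi^{cg}(x,y)\Phi^{hg}(x,y)=\delta^{ch}$ implied by the second relation in (\ref{expprop}); this yields
\[
f^{abc}\,\Phi^{ae}(x,y)\Phi^{bd}(x,y)=\Phi^{cg}(x,y)\,f^{edg}.
\]
Applying this to the first pair $f^{a_1ba_2}\Phi^{a_1d}\Phi^{a_2c}$ in (\ref{expprop3}) --- after a cyclic relabelling that puts the two contracted indices of $f^{a_1ba_2}$ in the first two slots --- replaces that pair by $f^{cdb_1}\Phi^{bb_1}$. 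Substituting this back, the originally shared index $b$ now enters through $\Phi^{bb_1}$, so the remaining structure constant $f^{a_3ba_4}$ has all three of its indices contracted with the exponentials $\Phi^{a_3e}$, $\Phi^{bb_1}$, $\Phi^{a_4g}$. At this point (\ref{expprop2}) applies a second time and collapses them to $f^{eb_1g}$, leaving $f^{cdb_1}f^{eb_1g}$; a final cyclic permutation $f^{cdb_1}=f^{db_1c}$ together with the renaming of the dummy index reproduces $f^{dbc}f^{ebg}$, which is the claim.

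The conceptual content is light --- everything is the adjoint-invariance of $f^{abc}$ --- so the main obstacle is purely bookkeeping: one has to track index orderings and signs against the conventions fixed in (\ref{deffs})--(\ref{deffs3}) and (\ref{phieq}), and verify that the three Leibniz terms assemble into precisely the Jacobi combination rather than something off by a sign or a permutation. A secondary point that deserves care is the geometric input (\ref{deffs8}): it is what guarantees that vanishing of the $\sigma^\rho$-contracted derivative is genuinely a statement about the radial geodesic direction, so that, together with the convexity of $U$, the initial condition at $x=y$ really does propagate to the whole domain.
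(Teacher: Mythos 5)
Your proof is correct, and its core coincides with the paper's argument for (\ref{expprop2}): apply $\sigma^\rho(x,y)\partial_{x^\rho}$ to $f^{abc}\Phi^{ae}\Phi^{bd}\Phi^{cg}$, trade $\sigma^\rho\partial_{x^\rho}\Phi$ for the connection term via (\ref{phieq}) and (\ref{deffs}), and observe that the three Leibniz terms cancel by the Jacobi identity in the form (\ref{constprop}), so the quantity is annihilated by $\sigma^\rho\partial_{x^\rho}$ and equals $f^{edg}$ at $x=y$. (Your bracket carries the wrong overall sign --- the substitution produces $-\beta^m(\cdots)$ --- but this is immaterial since the bracket vanishes.) Where you genuinely diverge is the step converting the differential statement into constancy: the paper invokes the covariant Taylor representation of Lemma \ref{exprepr1} (formula (\ref{exprepr3})) to expand the product in powers of $\sigma^{\rho}(x,y)$, so that annihilation by $\sigma^\rho\partial_{x^\rho}$ forces independence of $x$; you instead use (\ref{deffs8}) to read the equation as constancy of the product along every geodesic issuing from $y$, and let convexity of $U$ propagate the value at $x=y$. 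Your route is more elementary --- it bypasses Lemma \ref{exprepr1} entirely --- at the cost of one point you leave implicit: since $\sigma^\rho(x,y)$ vanishes at $x=y$, the ODE argument gives constancy only on the punctured geodesic, and continuity of $\Phi$ at the endpoint is what attaches the initial value $f^{edg}$. Finally, where the paper states (\ref{expprop3}) as a bare ``consequence of the first'' identity, you supply the actual derivation: contract (\ref{expprop2}) with $\Phi^{hg}$ and use $\Phi^{cg}\Phi^{hg}=\delta^{ch}$ from (\ref{expprop}) to obtain $f^{abc}\Phi^{ae}\Phi^{bd}=f^{edg}\Phi^{cg}$, apply this to the pair $f^{a_1ba_2}\Phi^{a_1d}\Phi^{a_2c}$, and then apply (\ref{expprop2}) once more to the remaining fully contracted triple. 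I have checked the index bookkeeping in that chain ($f^{a_1ba_2}\Phi^{a_1d}\Phi^{a_2c}=f^{cdb_1}\Phi^{bb_1}$, then $f^{a_3ba_4}\Phi^{a_3e}\Phi^{bb_1}\Phi^{a_4g}=f^{eb_1g}$, then $f^{cdb_1}f^{eb_1g}=f^{dbc}f^{ebg}$), and it is sound; this explicit treatment is a useful supplement to the paper.
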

\begin{proof} Let us apply the operator $\sigma^\rho(x,y)\partial_\rho$ to the left hand side of  formula (\ref{expprop2}). In the case, the derivative acts on every $\Phi$-factor. So we get three terms.
	
Then we need to combine formulae (\ref{deffs}) and (\ref{phieq}) and substitute 
$-\sigma^\rho(x,y) f^{adb}B^d_\rho(x)\Phi^{bc}(x,y)$ instead of $\sigma^\rho(x,y)\partial_{x^\rho}\Phi^{ac}(x,y)$. After that we obtain three new terms
	\begin{align*}
		&-f^{abc}\big(\sigma^\rho(x,y) f^{aij}B^i_\rho(x)\Phi^{je}(x,y)\big)
		\Phi^{bd}(x,y)\Phi^{cg}(x,y)\\&
		-f^{abc}\Phi^{ae}(x,y)
		\big(\sigma^\rho(x,y) f^{bij}B^i_\rho(x)\Phi^{jd}(x,y)\big)\Phi^{cg}(x,y)\\&
		-f^{abc}\Phi^{ae}(x,y)
		\Phi^{bd}(x,y)\big(\sigma^\rho(x,y) f^{cij}B^i_\rho(x)\Phi^{jg}(x,y)\big).
	\end{align*}
	
Therefore, if we apply property (\ref{constprop}) to the first term, it can be converted into the second and the third terms, but with opposite signs. This means we have proven that
	\begin{equation}\label{derofconst}
		\sigma^\rho(x,y)\partial_{x^\rho}\Big(f^{abc}\Phi^{ae}(x,y)
		\Phi^{bd}(x,y)\Phi^{cg}(x,y)\Big)=0.
	\end{equation}
	
Now we need to use the representation for the path-ordered exponential derived above, see (\ref{exprepr3}). From this it follows that we can expand the ordered exponential in the covariant Taylor series in powers of $\sigma^\rho(x,y)$, because the field $B_\mu(x)$ is smooth by definition.  
	
Hence, formula (\ref{derofconst}) means that $f^{abc}\Phi^{ae}(x,y)
\Phi^{bd}(x,y)\Phi^{cg}(x,y)$ does not depend on the variable $x$. So we can take any convenient value of $x\in U$. If we choose $x=y$ and use the property from (\ref{expprop}), then we get the first statement of the lemma. The second statement is the consequence of the first one.
\end{proof}

The last lemma is devoted to a product of three path-ordered exponentials. For convenience, we need to define the field strength tensor, components of which are equal to
\begin{equation*}
F_{\mu\nu}^a(x)=\partial_{x^\mu}^{}B_\nu^a(x)-\partial_{x^\nu}^{}B_\mu^a(x)+f^{abc}B_\mu^b(x)B_\nu^c(x),
\end{equation*}
where $x\in U$.

\begin{lemma}\label{philem}
	Let $x,y,z\in U$. Then, under the conditions described above, we have the following relation
	\begin{equation}\label{phinew1}
		\Phi(z,x)\Phi(x,y)\Phi(y,z)=\Phi(x,y)\Big|_{B_\mu(\cdot)\to f_\mu(\cdot-z,z)},
	\end{equation}
	where 
\begin{equation}\label{SD44}
f^{ab}_\sigma(x-z,z)=\Phi^{ac}(z,x)
\big(\delta^{cd}\partial_{x^\sigma}+f^{ced}B^e_{\sigma}(x)\big)
\Phi^{db}(x,z).
\end{equation}	
Moreover, let $g^{\mu\nu}(\cdot)=\delta^{\mu\nu}$ for all points from $U$. Also, $\mathbf{1}$, $B_\mu$, and $F_{\mu\nu}$ denote the matrix-valued operators with the elements $\delta^{ab}$, $f^{adb}B^d_\mu$, and $f^{adb}F^d_{\mu\nu}$, respectively.
Also, we define an additional derivative of a special type by the formula $\mathfrak{D}_{x^\mu}\cdot=\partial_{x^\mu}\cdot+[B_\mu(x),\cdot\,]$. So, we obtain
	\begin{equation}\label{SD4}
		f^{ab}_\sigma(x-z,z)=\sum_{k=1}^{+\infty}
		\frac{(x-z)^{\mu_1\ldots\mu_k}}{(k-1)!(k+1)}\mathfrak{D}^{ac_1}_{z^{\mu_1}}\ldots
		\mathfrak{D}^{c_{k-2}c_{k-1}}_{z^{\mu_{k-1}}}f^{c_{k-1}c_kb}F^{c_k}_{\mu_k\sigma}(z),
	\end{equation}
where $(x-z)^{\mu_1\ldots\mu_k}=(x-z)^{\mu_1}\cdot\ldots\cdot(x-z)^{\mu_k}$.
The first terms in decomposition (\ref{phinew1}) have the form
	\begin{align}\nonumber
		\Phi(z,x)\Phi(x,y)\Phi(y,z)=\mathbf{1}&+\frac{1}{2}(x-z)^\nu(y-z)^\mu F_{\nu\mu}(z)+
		\frac{1}{6}(z-y)^\nu(z-x)^\sigma(y-x)^\mu \mathfrak{D}_{z^{(\nu}}F_{\sigma)\mu}(z)\\\nonumber&+
		\frac{1}{8}(x-z)^{\nu\sigma}(y-z)^{\mu\rho} F_{\nu\mu}(z)F_{\sigma\rho}(z)-
		\frac{1}{24}(y-x)^{\mu\nu\sigma}(x-z)^\rho\mathfrak{D}_{z^{\mu}}\mathfrak{D}_{z^{\nu}}F_{\sigma\rho}(z)
		\\\label{expprop1}&-\frac{1}{16}(x-z)^{\mu\rho}(y-x)^{\sigma}
		(y-z)^\nu\mathfrak{D}_{z^{(\mu}}\mathfrak{D}_{z^{\nu)}}F_{\sigma\rho}(z)+\ldots,
	\end{align}
where "three dots" denotes terms, which have the total degree of a monomial more than four. Parentheses denote index symmetrization without division by the corresponding factorial.
\end{lemma}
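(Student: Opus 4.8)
The plan is to handle the three assertions in turn: the gluing identity (\ref{phinew1}) is the conceptual core, while the expansion (\ref{SD4}) and the explicit terms (\ref{expprop1}) follow from it by expansion.

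For (\ref{phinew1}) the starting observation is that (\ref{SD44}) is precisely the gauge transform of the connection $B_\sigma$ under the local rotation $\Omega(\cdot)=\Phi(z,\cdot)$, that is $f_\sigma=\Omega B_\sigma\Omega^{-1}+\Omega\,\partial_\sigma\Omega^{-1}$ with $\Omega(x)=\Phi(z,x)$ and $\Omega^{-1}(x)=\Phi(x,z)$ by the inverse rule in (\ref{expprop}). I would then establish gauge covariance of the ordered exponential directly from its defining equation: setting $\tilde\Phi(x,y):=\Omega(x)\Phi(x,y)\Omega^{-1}(y)$, a substitution into the first relation of (\ref{phieq}) shows that the terms containing $\partial\Omega$ cancel, so $\tilde\Phi$ solves the same first-order equation with $B\to f$ and satisfies $\tilde\Phi(y,y)=\mathbf 1$. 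By the uniqueness underlying (\ref{expdef}) it equals $\Phi(x,y)|_{B\to f}$, and the choice $\Omega(x)=\Phi(z,x)$ turns the left side into $\Phi(z,x)\Phi(x,y)\Phi(y,z)$, which is (\ref{phinew1}).

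For (\ref{SD4}) I specialise to $g^{\mu\nu}=\delta^{\mu\nu}$, so geodesics are straight lines and $\sigma^\rho(x,z)=(x-z)^\rho$. Two facts drive the computation. First, $f_\sigma$ is in the radial (Fock--Schwinger) gauge about $z$: contracting (\ref{SD44}) with $(x-z)^\sigma=\sigma^\sigma(x,z)$ and using (\ref{phieq}) for $\Phi(x,z)$ gives $(x-z)^\sigma f_\sigma(x-z,z)=0$. Second, since the field strength transforms covariantly, the strength built from $f$ equals the conjugate $\mathcal F_{\mu\sigma}(x)=\Phi(z,x)F_{\mu\sigma}(x)\Phi(x,z)$. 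Combining the radial condition with the definition of the field strength yields $(1+(x-z)^\mu\partial_{x^\mu})f_\sigma=(x-z)^\mu\mathcal F_{\mu\sigma}$. Expanding both sides into homogeneous pieces in $(x-z)$ and using that the degree-$n$ part is an eigenvector of the Euler operator $(x-z)^\mu\partial_{x^\mu}$ with eigenvalue $n$, the degree-$n$ term of $f_\sigma$ is $\tfrac{1}{n+1}(x-z)^\mu(\mathcal F_{\mu\sigma})_{n-1}$. Inserting the covariant Taylor expansion of the adjoint tensor $\mathcal F_{\mu\sigma}$ about $z$, whose ordinary Taylor coefficients are the symmetrised covariant derivatives $\mathfrak D_{\nu_1}\dots\mathfrak D_{\nu_{n-1}}F_{\mu\sigma}(z)$ (the flat-space content of the covariant Taylor machinery already used in Lemma \ref{exprepr1}, cf. (\ref{expanB})--(\ref{deffs11})), and matching $k=n$, reproduces the coefficient $\tfrac{1}{(k-1)!(k+1)}$ of (\ref{SD4}).

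Finally, for (\ref{expprop1}) I would substitute $f_\sigma(\gamma(s)-z,z)$ from (\ref{SD4}) into the series (\ref{expdef}) along the straight segment $\gamma$ from $y$ to $x$, parametrised linearly, and integrate term by term, grouping contributions by the total degree of the monomials in $(x-z)$, $(y-z)$, $(x-y)$. Each elementary $\int_0^1$ over the segment produces the rational prefactors, and antisymmetry of $F_{\mu\nu}$ collapses several monomials; at second order this already gives $\tfrac12(x-z)^\nu(y-z)^\mu F_{\nu\mu}(z)$. Retaining monomials of total degree at most four yields (\ref{expprop1}). The conceptually delicate step is the justification in the previous paragraph that the ordinary Taylor coefficients of the conjugated tensor $\mathcal F_{\mu\sigma}$ are exactly the covariant derivatives $\mathfrak D^{\,n-1}F$ at $z$, where the geodesic/covariant-Taylor argument of Lemma \ref{exprepr1} must be reused; the remaining obstacle is purely combinatorial, namely tracking the factors $\tfrac16,\tfrac18,\tfrac1{24},\tfrac1{16}$ and the index symmetrisations through the nested integrals of the ordered exponential.
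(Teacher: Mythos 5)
Your proposal is correct, and for the two main assertions it follows essentially the same route as the paper. For (\ref{phinew1}) the paper introduces the gauge-transformed derivatives $\overrightarrow{D}_{x^\sigma}^{ab}(z)=\Phi^{ac}(z,x)\overrightarrow{D}_{x^\sigma}^{cd}\Phi^{db}(x,z)$, rewrites both equations of (\ref{phieq}) for the triple product $\Psi(x,y;z)=\Phi(z,x)\Phi(x,y)\Phi(y,z)$, checks $\Psi(y,y;z)=\mathbf 1$, and concludes that $\Psi$ is the ordered exponential for the connection $f_\mu(\cdot-z,z)$ --- exactly your ``conjugate by $\Omega(\cdot)=\Phi(z,\cdot)$, cancel the $\partial\Omega$ terms, invoke uniqueness of the transport problem'' argument in different notation; both proofs lean on the same (implicit) uniqueness of solutions of $\sigma^\mu\overrightarrow{D}_\mu\Psi=0$ with identity initial data along geodesics. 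For (\ref{expprop1}) both you and the paper simply substitute the expansion of $f_\sigma$ into the defining series (\ref{expdef}) along the straight segment $\gamma(s)=y+s(x-y)$ and collect monomials. The genuine divergence is (\ref{SD4}): the paper does not prove it at all, but imports it from reference [33] (``is the particular case and has been derived in the paper [33]''), whereas you derive it from scratch via the radial (Fock--Schwinger) gauge identity $(x-z)^\sigma f_\sigma(x-z,z)=0$ --- which indeed follows from contracting (\ref{SD44}) with $\sigma^\sigma(x,z)$ and using (\ref{phieq}) --- the covariance of the field strength, the resulting relation $\big(1+(x-z)^\mu\partial_{x^\mu}\big)f_\sigma=(x-z)^\mu\Phi(z,x)F_{\mu\sigma}(x)\Phi(x,z)$, and homogeneity under the Euler operator. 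This makes your proof self-contained where the paper's is not; the price, which you correctly flag, is that you must also establish that the ordinary Taylor coefficients of the conjugated tensor $\Phi(z,x)F_{\mu\sigma}(x)\Phi(x,z)$ are the symmetrized $\mathfrak D$-derivatives $\mathfrak D_{\mu_1}\cdots\mathfrak D_{\mu_{k-1}}F_{\mu_k\sigma}(z)$, a fact provable by rerunning the geodesic argument of Lemma~\ref{exprepr1} with $\nabla$ replaced by $\mathfrak D$. Your sanity check of the lowest-order term $\tfrac12(x-z)^\nu(y-z)^\mu F_{\nu\mu}(z)$ is also consistent with (\ref{expprop1}).
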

\begin{proof} Let us introduce two gauge transformed derivatives according to the formulae
\begin{align}
		\label{SD3}
		\overrightarrow{D}_{x^\sigma}^{ab}(z)&=\Phi^{ac}(z,x)\overrightarrow{D}_{x^\sigma}^{cd}\Phi^{db}(x,z)=\delta^{ab}\overrightarrow{\partial}_{x^\sigma}+f^{ab}_\sigma(x-z,z),
		\\\label{SD7}
		\overleftarrow{D}_{x^\sigma}^{ab}(z)&=\Phi^{ac}(z,x)\overleftarrow{D}_{x^\sigma}^{cd}\Phi^{db}(x,z)=\delta^{ab}\overleftarrow{\partial}_{x^\sigma}-f^{ab}_\sigma(x-z,z),
	\end{align}	
and rewrite the equations from (\ref{phieq}) in the following form
	\begin{equation*}
		\sigma^\mu(x,y)\overrightarrow{D}^{ab}_{\mu}(z)\big(\Phi^{be}(z,x)\Phi^{ed}(x,y)\Phi^{dc}(y,z)\big)=0,\,\,\,
		\big(\Phi^{ab}(z,x)\Phi^{be}(x,y)\Phi^{ed}(y,z)\big)\overleftarrow{D}^{dc}_{\mu^\prime}(z)\sigma^{\mu^\prime}(x,y)=0.
	\end{equation*}	
Moreover, the combination $\Psi^{ac}(x,y;z)=\Phi^{ab}(z,x)\Phi^{bd}(x,y)\Phi^{dc}(y,z)$ satisfies the following initial condition $\Psi^{ac}(y,y;z)=\delta^{ac}$, which is the consequence of (\ref{expprop}). This means that $\Psi^{ac}(x,y;z)$ is the ordered exponential for the connection components of the form $f^{ab}_\sigma(\cdot-z,z)$. Hence, we have obtained statement (\ref{phinew1}). Formula (\ref{SD4}) is the particular case and has been derived in the paper \cite{33}. Then, expansion (\ref{expprop1}) follows from result (\ref{phinew1}) and the definition from (\ref{expdef}) with the use of the explicit form for geodesic $\gamma(s)=y+s(x-y)$.
\end{proof}

\section{Application to the Yang--Mills theory}
\label{17-08-22:sec:appl}
Let us introduce an effective action for the four-dimensional quantum Yang--Mills theory \cite{1,3}. For this we use the main conditions of Section \ref{17-08-22:sec:ord}, but with some additional restrictions: $M=\mathbb{R}^4$, $g^{\mu\nu}(\cdot)=\delta^{\mu\nu}$ for all points from $\mathbb{R}^4$, and the field $B_\mu(x)$ satisfies the quantum equation of motion, which follows from the use of the path integral formulation \cite{3,23} and the background field method \cite{102,103,24,25,26}. This field is called background field.

For further work we need to define some additional constructions, such as the classical action of the Yang--Mills theory
\begin{equation}\label{Clact}
	W_{-1}[B]=\frac{1}{4}\int_{\mathbb{R}^4}d^4x\,
	F^a_{\mu\nu}(x)F^{a}_{\mu\nu}(x),
\end{equation}
the following two Laplace-type operators
\begin{equation}\label{oper}
	M_0^{ab}(x)=-\overrightarrow{D}_{x_\mu}^{ae}\overrightarrow{D}_{x^\mu}^{eb},\,\,\,
	M_{1\mu\nu}^{\,\,\,ab}(x)=M_0^{ab}(x)\delta_{\mu\nu}^{}-2f^{acb}F_{\mu\nu}^c(x),
\end{equation}
and the corresponding Green's functions $G_0$ and $G_1$, which follow from the equalities
\begin{equation}
\label{green}
M_{1\mu\nu}^{\,\,\,ab}(x)G_{1\nu\rho}^{\,\,\,bc}(x,y)=\delta^{ac}\delta_{\mu\rho}\delta(x-y),\,\,\,\,\,\,
M_0^{ab}(x)G_{0}^{bc}(x,y)=\delta^{ac}\delta(x-y),
\end{equation}
with appropriate boundary conditions. These conditions have a physical nature and, actually, are not studied well enough. They should be such that the problem for the quantum equation of motion would be well posed.

Now we are ready to introduce the effective action of the Yang--Mills theory for small values of a coupling constant $g$. This is a function of the background field $B_\mu$, which has the following asymptotic expansion for $g\to+0$
\begin{equation}
\label{eq20}
W[B,\Lambda]=\frac{1}{g^2}W_{-1}[B]+\bigg\{\frac{1}{2}\ln\det\big(M_1^{\Lambda}/M_1^{\Lambda}|_{B=0}\big)-\ln\det\big(M_0^{\Lambda}/M_0^{\Lambda}|_{B=0}\big)\bigg\}+
\sum_{n=1}^{+\infty}g^{2n}W_n[B,\Lambda],
\end{equation}
where the symbol $\Lambda$ denotes some type of regularization, such that the Green's functions are deformed near the diagonal ($x\sim y$). Removing the regularization corresponds to the limit $\Lambda\to+\infty$. Using the language of quantum field theory, we can declare that $W_n[B,\Lambda]$ is the $(n+1)$-th quantum correction, corresponding to the $(n+1)$-loop contribution, see \cite{21,22}. Precise formulae for these corrections can be found in papers \cite{23,Ivanov-Kharuk-2020}. Fortunately, explicit expressions do not matter in our calculations, that is why we have introduced the effective action in such a general form.

Let us draw attention that the regularization is necessary, because all corrections to the classical action in (\ref{eq20}) contain divergent integrals. Some first terms have been studied earlier. For example, the divergent part of  "$\ln\det$" has been calculated explicitly in \cite{16,17,12}. The two-loop contribution also has been computed with the use of different regularizations, see \cite{Ivanov-Kharuk-2020,12,13, Iv-Kh-2022}.

Then, we should note that the corrections from the last sum can be constructed with the use of the integration operator over $\mathbb{R}^4$ and the following elementary blocks (with the regularization applied)
\begin{equation}\label{gendiag5000}
\begin{tabular}{cccc}
		$G_{0}^{ab}(x_i,x_j),$&
		$\overrightarrow{D}^{ac}_{x_i^\mu}G_{0}^{cb}(x_i,x_j),$&
		$G_{0}^{ac}(x_i,x_j)\overleftarrow{D}^{cb}_{x_j^\sigma},$&
		$\overrightarrow{D}^{ae}_{x_i^\mu}G_{0}^{ec}(x_i,x_j)\overleftarrow{D}^{cb}_{x_j^\sigma},$\\
		$G_{1\nu\rho}^{\,\,\,ab}(x_i,x_j),$&
		$\overrightarrow{D}^{ac}_{x_i^\mu}G_{1\nu\rho}^{\,\,\,cb}(x_i,x_j),$&
		$G_{1\nu\rho}^{\,\,\,ac}(x_i,x_j)\overleftarrow{D}^{cb}_{x_j^\sigma}$,&
		$\overrightarrow{D}^{ae}_{x_i^\mu}G_{1\nu\rho}^{\,\,\,ec}(x_i,x_j)\overleftarrow{D}^{cb}_{x_j^\sigma},$
\end{tabular}
\end{equation}
which are connected to each other only by the following combinations of the structure constants:
\begin{equation}
f^{abc}\,\,\,\mbox{and}\,\,\,f^{abe}f^{ecd}.
\end{equation}
It is quite important to note two conditions. Firstly, all the variables should be under the integration. Secondly, if three (four) basic blocks are connected to each other with the usage of $f^{abc}$ ($f^{abe}f^{ecd}$), then the corresponding ends of the blocks have the same variables. Other combinations do not exist.
Also, we have noted above that the blocks from (\ref{gendiag5000}) should contain regularized Green's functions instead of $G_0$ and $G_1$. Now we want to introduce one restriction on a type of regularization. We assume that regularized Green's functions respect covariance. This means that after applying the gauge transformation, we obtain
\begin{equation}\label{gt1}
B_{\mu}^{\phantom{\omega}}(x)\to B_{\mu}^{\omega}(x)=
\omega^{-1}(x)B_{\mu}^{\phantom{\omega}}(x)\omega(x)+
\omega^{-1}(x)\partial_{x^\mu}^{\phantom{\omega}}\omega(x)
\end{equation}
and the regularized Green's functions get the following changes
\begin{align}\label{gt2}
G_{0}^{ab}(x_i,x_j)\Big\vert^{\scriptsize{\Lambda\mbox{-reg.}}}&\to
\bigg(G_{0}^{ab}(x_i,x_j)\Big\vert^{\scriptsize{\Lambda\mbox{-reg.}}}_{B_{\mu}^{\phantom{\omega}}\to B_{\mu}^{\omega}}\bigg)=
\omega^{-1}(x_i)\bigg(G_{0}^{ab}(x_i,x_j)\Big\vert^{\scriptsize{\Lambda\mbox{-reg.}}}\bigg)\omega(x_j),\\
\label{gt3}
G_{1\nu\rho}^{\,\,\,ab}(x_i,x_j)\Big\vert^{\scriptsize{\Lambda\mbox{-reg.}}}&\to
\bigg(G_{1\nu\rho}^{\,\,\,ab}(x_i,x_j)\Big|^{\scriptsize{\Lambda\mbox{-reg.}}}_{B_{\mu}^{\phantom{\omega}}\to B_{\mu}^{\omega}}\bigg)=
\omega^{-1}(x_i)\bigg(G_{1\nu\rho}^{\,\,\,ab}(x_i,x_j)\Big\vert^{\scriptsize{\Lambda\mbox{-reg.}}}\bigg)\omega(x_j),
\end{align}
where $\omega(\cdot)\in C^{\infty}(\mathbb{R}^4,G)$. Of course, non-regularized functions have this property by construction. Using the fact that in further calculations we want to make a special type of the gauge transformation, we conclude that for simplicity we can produce all manipulations with (\ref{gendiag5000}) instead of the regularized ones, because the deformation does not break the covariance. Also, we need to draw attention that such regularizations exist. As an example, we can suggest the explicit cutoff regularization in coordinate representation, recently studied in \cite{Ivanov-2022}.

For further study, we need to give some basic concepts and results on the heat kernel expansion, \cite{28,32,30,vas1}. Let us introduce a Laplace-type operator $A$, which has a more general view than one from (\ref{oper}). Locally, it has the following form
\begin{equation}\label{oper1}
A^{ab}(x)=IM_0^{ab}(x)-v^{ab}(x),
\end{equation}
where $I$ is an $n\times n$ unit matrix with $n\in\mathbb{N}$, and $v^{ab}(x)$ is an $n\times n$ matrix-valued smooth potential, such that the operator $A$ is symmetric. If we take $n=4$, $(I)_{\mu\nu}=\delta_{\mu\nu}$, and $(v^{ab})_{\mu\nu}(x)=2f^{acb}F^d_{\mu\nu}(x)$, then we obtain the operator $M_{1\mu\nu}^{\,\,\,ab}(x)$. Also, for the convenience we will not write the matrix $I$ in the rest of the text, because this does not make any confusion.

Then we move on to a definition of the local heat kernel. As we know, the standard heat kernel is the solution of the problem
\begin{equation}\label{HK1}
\big(\delta^{ac}\partial_\tau+A^{ac}(x)\big)K^{cb}(x,y;\tau)=0,\,\,\,K^{ab}(x,y;0)=\delta^{ab}\delta(x-y),
\end{equation}
paired with boundary conditions for (\ref{oper1}). At the same time, the local heat kernel \cite{iv-kh-2022} is defined as a solution of (\ref{HK1}), which for small values of the proper time $\tau\to+0$ has the following asymptotic expansion
\begin{equation}\label{HK2}
K^{ab}(x,y;\tau)=(4\pi\tau)^{-2}e^{-|x-y|^2/4\tau}\sum_{k=0}^{+\infty}\tau^k\mathfrak{a}_k^{ab}(x,y).
\end{equation}
We note that an asymptotic expansion for the ordinary heat kernel of $A$ can be presented as the sum of (\ref{HK2}) and some additional corrections. The coefficient $\mathfrak{a}^{ab}(x,y)$ of expansion (\ref{HK2}) is Seeley--DeWitt coefficient (also, it is named after Hadamard, Minakshisundaram \cite{111}, and Gilkey \cite{32}) coefficients, see \cite{110,1000}, and can be calculated recurrently \cite{31,33,32}, because they satisfy the following system of equations
\begin{equation}\label{SD1}
\mathfrak{a}_{0}^{ab}(x,y)=\Phi^{ab}(x,y),\,\,\,
\big(k+(x-y)^\sigma\overrightarrow{D}_{x^\sigma}^{ac}\big)
\mathfrak{a}_{k}^{cb}(x,y)=-A^{ac}(x)
\mathfrak{a}_{k-1}^{cb}(x,y),\,\,\,k\geqslant 1.
\end{equation}

Then, we are going to make a substitution. Let $z\in\mathbb{R}^4$ be an auxiliary point, that will play a role of an additional parameter. Then, we define
\begin{equation}\label{SD2}
\mathfrak{a}_{k}^{ab}(x,y;z)=
\Phi^{ac}(z,x)\mathfrak{a}_{k}^{cd}(x,y)\Phi^{db}(y,z).
\end{equation}

Further, we substitute formula (\ref{SD2}) into relation (\ref{SD1}) and introduce gauge transformed operators, according to the papers \cite{33,333}, in the form
\begin{align}\label{SD5}
M_0^{ab}(x-z;z)&=\Phi^{ac}(z,x)M_0^{cd}(x)\Phi^{db}(x,z)=-
\overrightarrow{D}_{x_\sigma}^{ac}(z)\overrightarrow{D}_{x^\sigma}^{cb}(z),
\\\label{SD6}
A^{ab}_z(x)&=\Phi^{ac}(z,x)A^{cd}(x)\Phi^{db}(x,z)
\\\nonumber&=
\delta_{\mu\nu}M_0^{ab}(x-z;z)-
\sum_{k=0}^{+\infty}
\frac{(x-z)^{\mu_1\ldots\mu_k}}{k!}\mathfrak{D}^{ac_1}_{z^{\mu_1}}\ldots
\mathfrak{D}^{c_{k-1}c_{k}}_{z^{\mu_{k}}}v^{c_{k}b}(z),
\end{align}
where we have used the Fock--Schwinger connection components (\ref{SD4}) and formula (\ref{oper}). Hence, the system of the equations from (\ref{SD1}) can be rewritten in the following form
\begin{align}\label{SD8}
\mathfrak{a}_{0}^{ab}(x,y;z)&=\Phi^{ac}(z,x)\Phi^{cd}(x,y)\Phi^{db}(y,z),\\\nonumber
\big(k+(x-y)^\sigma\overrightarrow{D}_{x^\sigma}^{ac}(z)\big)
\mathfrak{a}_{k}^{cb}(x,y;z)&=-A^{ac}_z(x)
\mathfrak{a}_{k-1}^{cb}(x,y;z),\,\,\,k\geqslant 1.
\end{align}

Now we are ready to note a very useful information. It follows from formulae (\ref{SD4}), (\ref{SD3}), (\ref{SD7}), (\ref{SD5}), and (\ref{SD6}), that if the potential $v$ is a function only of the field strength and its $\mathfrak{D}$-derivatives, then the decomposition of the Seeley--DeWitt coefficients $\mathfrak{a}_{k}^{cb}(x,y;z)$ can be constructed by using only the field stress tensor and its $\mathfrak{D}$-derivatives. As we can see, the operators $M_0^{ab}$ and $M_{1\mu\nu}^{\,\,\,ab}$ for the Yang--Mills theory are suitable for this statement. Additionally, we emphasize that formulae (\ref{SD1})--(\ref{SD8}) are valid for all dimensions.

Further, using the formulae introduced above, we can write out the following asymptotic behaviour near the diagonal for the Green's function in the four-dimensional space \cite{1000,29}
\begin{multline}\label{SD9}
\big(A^{-1}_z\big)^{ab}(x,y)=R_0(x-y)\mathfrak{a}_{0}^{ab}(x,y;z)+
R_1(x-y)\mathfrak{a}_{1}^{ab}(x,y;z)\\+
R_2(x-y)\mathfrak{a}_{2}^{ab}(x,y;z)+\mathcal{N}^{ab}_z(x,y)+\mathcal{ZM}^{ab}_z(x,y),
\end{multline}
where
\begin{equation}\label{SD10}
R_0(x)=\frac{1}{4\pi^2|x|^2},\,\,\,
R_1(x)=-\frac{\ln(|x|^2\mu^2)}{16\pi^2},\,\,\,
R_2(x)=\frac{|x|^2\big(\ln(|x|^2\mu^2)-1\big)}{64\pi^2},
\end{equation}
$\mathcal{N}^{ab}_z$ is a non-local part \cite{29}, depending on the boundary conditions of a spectral problem, and $\mathcal{ZM}^{ab}_z$ is a number of local zero modes to satisfy the problem. Let us separately note, it was shown in the paper \cite{Kharuk-2021}, that an infrared part in the second loop does not depend on $\mathcal{ZM}^{ab}_z$. Moreover, in the calculation process, we can choose $\mathcal{ZM}^{ab}_z$ in such a way \cite{Ivanov-Kharuk-2020}, that the non-local part $\mathcal{N}^{ab}_z$ would have the following behaviour near the diagonal $x\sim y$
\begin{equation}\label{SD11}
\mathcal{N}^{ab}_z(x,y)=-\frac{|x-y|^2}{2^7\pi^2}\mathfrak{a}_{2}^{ab}(y,y)\big(1+o(1)\big).
\end{equation}

Now we move on to the property mentioned in the introduction. For this we should show the possibility of the following replacement $f^{adb}B^d_\mu(\cdot)\to f^{ab}_\mu(\,\cdot-z,z)$ in the effective action, where $z\in\mathbb{R}^4$ is an auxiliary point and $f^{ab}_\mu$ is from (\ref{SD4}). The procedure can be carried out with the application of the gauge transformation with the usage of the ordered exponential $\Phi^{ab}(\,\cdot\,,z)$. In the case we have 
formulae (\ref{SD3}), (\ref{SD7}), and
\begin{align}\label{gendiag3}
G_{0}^{ab}(x_i,x_j)&=
\Phi^{ac}(x_i,z)G_{0}^{cd}(x_i,x_j;z)\Phi^{db}(z,x_j),\\\label{gendiag4}
G_{1\nu\rho}^{\,\,\,ab}(x_i,x_j)&=
\Phi^{ac}(x_i,z)G_{1\nu\rho}^{\,\,\,cd}(x_i,x_j;z)\Phi^{db}(z,x_j).
\end{align}
Using the last substitutions and formulae (\ref{green}), (\ref{SD5}), and (\ref{SD6}), we can conclude, that the new covariant derivatives $\overrightarrow{D}^{ab}_{x_i^\mu}(z)$ and $\overleftarrow{D}^{ab}_{x_j^\mu}(z)$ and the Green's functions $G_{0}^{ab}(x_i,x_j;z)$ and $G_{1\nu\rho}^{\,\,\,ab}(x_i,x_j;z)$ are constructed only with the use of the following connection components $f^{ab}_\mu(\,\cdot-z,z)$, see (\ref{SD4}), which are functions of the field stress tensor $F_{\mu\nu}$ and its $\mathfrak{D}$-derivatives.

At the same time after the gauge transformation, every building block from (\ref{gendiag5000}) is multiplied by the ordered exponential $\Phi^{ea}(x_i,z)$ on the left hand side and by $\Phi^{bg}(z,x_j)$ on the right hand side. Further, let us remember, that the top indices of the blocks should be connected to each other by the $f^{abc}$ or $f^{abe}f^{ecd}$. Hence, using the results of Lemma \ref{mainlem}, we can exclude all additional factors $\Phi^{ea}(x_i,z)$ and $\Phi^{bg}(z,x_j)$ from the blocks.
Therefore, we obtain the following equalities
\begin{equation}\label{gendiag2}
W_n\big[B,\Lambda\big]=W_n\big[f(\,\cdot-z,z),\Lambda\big]
\,\,\,\mbox{for all}\,\,\,n\geqslant1,
\end{equation}
where the functionals actually do not depend on  $z$. In other words, the procedure described above can be understood as re-expansion of the blocks near some auxiliary point. In the same manner we replace the operators in "$\ln\det$" by the gauge transformed ones. Then, using the invariance of the classical action with respect to the gauge transformations, we get
\begin{equation}\label{gendiag5}
	W\big[B,\Lambda\big]=W\big[f(\,\cdot-z,z),\Lambda\big].
\end{equation}

In the last part of the paper we want to formulate some actual open questions and make several useful remarks about divergencies searching. It is known that infrared divergences in the coordinate representation (or ultraviolet in the momentum one) follow from the appearance of non-integrable densities in (\ref{eq20}), because the behaviour (\ref{SD9}) of the Green's function near the diagonal ($x\sim y$) includes such functions as (\ref{SD10}).

At the same time, decomposition (\ref{SD9}) is applicable only in the vicinity of the diagonal, so we need to split the Green's function into two parts, in some neighbourhood of the $x=y$ and out of it. Formally, we can write this as
\begin{equation}\label{singord3}
\big(A^{-1}\big)(x,y)=
\underbrace{\big(A^{-1}\big)(x,y)\chi_{(\mu_1|x-y|<1)}}_{\mbox{\footnotesize "in" part}}+
\underbrace{\big(A^{-1}\big)(x,y)\chi_{(\mu_1|x-y|\geqslant1)}}_{\mbox{\footnotesize "out" part}},
\end{equation}
where $\mu_1>0$ is an auxiliary fixed  finite number of momentum dimension, and $\chi_{(\mbox{\footnotesize inequality})}$ is the Heaviside step function of a domain from $\mathbb{R}^4$. While for the first part we can write the standard decomposition (\ref{SD9}) near the diagonal, the second part, actually, is unknown to us, because it and its properties depend on the boundary conditions and the form of the background field. Anyway, we assume that "out" part has a good enough behaviour at infinity, so the effective action does not include other type of divergences. Otherwise, we need to introduce one more regularization and study these non-infrared singularities in the coordinate representation.

Now we are ready to formulate some questions.
\begin{itemize}
	\item Does the effective action contain the infrared singularity in coordinate representation, depending on the "out" part? For the positive answer, it is sufficient to provide explicit expression for some $W_n(B,\Lambda)$. We draw attention that the two-loop contribution $W_1(B,\Lambda)$ includes only singularities, related to the "in" part, because in this case the quantum correction is split into a sum of "in" and "out" parts, without mixing. Hence, without any fundamental justifications and assumptions, we are interested in the three-loop calculation. Moreover, it would be meaningful to describe the structure of such contributions for a cutoff regularization.
	\item Does the effective action contain the infrared singularity in coordinate representation, depending on the non-local part $\mathcal{N}^{ab}$? This is one more non-obvious issue. The fact is that in the two-loop contribution a special combination of non-local terms occurs, but it is converted to the local part with the use of the Seeley--DeWitt coefficients. The appearance or absence of the non-local part in the multi-loop terms is quite an interesting challenge, because it leads to the possibility of using simplifications in the calculations.
\end{itemize}
The last two questions are related to the form of the coefficients, their locality and dependence on the boundary conditions. However, we have a problem, corresponding to a structure of singularities itself. Indeed, in the quantum corrections after the regularization introduced we have some dimensional parameters: $\Lambda$ is the parameter of regularization, $\mu$ is the parameter from (\ref{SD10}), controlling presence of local zero modes in the Green's functions, and $\mu_1$ is the auxiliary parameter from (\ref{singord3}), which actually does not lead to any new dependence. Hence, the answer contain some dimensionless combinations of them. The question is the following.
\begin{itemize}
	\item Let $i,j\in\mathbb{N}\cup\{0\}$ and $n>0$. What combinations of $(\Lambda/m)^i\ln^j(\Lambda/m)$ does the correction $W_n(B,\Lambda)$ contain, where $m\in\{\mu,\mu_1\}$ is a dimension parameter? For example, it was shown \cite{Ivanov-Kharuk-2020,Iv-Kh-2022} that the two-loop correction includes only $\ln(\Lambda/\mu)$. And it is expected that for other orders we have $i=0$ and $j\in\{1,\ldots,n\}$.
\end{itemize}
If the last hypothesis is correct, then for the local singular contribution, depending on the "in" part, we have only one type of the coefficients with non-zero trace
$f^{aec}F^e_{\sigma\rho}f^{cdb}F^d_{\mu\nu}$.

\section{Conclusion}
\label{17-08-22:sec:conc}
In this paper, we have studied some non-trivial properties of the path-ordered exponentials. Additionally, we have shown that such equalities give the ability to make some useful conclusions about the structure of the four-dimensional Yang--Mills effective action. We believe that the result can be useful both in the general theory of renormalization of the Yang--Mills theory and in the investigation of some other models with non-Abelian structures. For example, in the sigma-models. 

Also, we need to repeat one more time a very important observation. The covariance under the gauge transformations of Feynman diagram blocks actually does not exclude non-logarithmic divergencies. To formulate the proposition in a stronger form we need some additional investigations, which are not from the scope of this work. They are formulated in the previous section as open questions.

In addition, we need to comment one more way to prove Lemma \ref{mainlem}. Let us introduce three auxiliary arbitrary elements $h_i=h_i^at^a$, where $i=1,2,3$, of the Lie algebra $\mathfrak{g}$. Then, an element $g$ of the Lie group $G$ acts on this algebra according to the adjoint representation as $\mathrm{Ad}_g(h_i)=gh_ig^{-1}$. Let $g^{ab}$ denotes the corresponding matrix elements, such that $\big(\mathrm{Ad}_g(h_i)\big)^at^a=g^{ab}h_i^bt^a$. After that we can write the following chain of equalities
\begin{align*}
f^{abc}\big(g^{ai}h_1^i\big)\big(g^{bj}h_2^j\big)\big(g^{ck}h_3^k\big)&=
-\frac{1}{2}\mathrm{tr}\big([\mathrm{Ad}_g(h_1),\mathrm{Ad}_g(h_2)]\mathrm{Ad}_g(h_3)\big)\\&=
-\frac{1}{2}\mathrm{tr}\big(\mathrm{Ad}_g([h_1,h_2])\mathrm{Ad}_g(h_3)\big)\\&=
-\frac{1}{2}\mathrm{tr}\big([h_1,h_2]h_3\big)=
f^{abc}h_1^ah_2^bh_3^c,
\end{align*}
from which we obtain an alternative proof. Despite the fact that this derivation looks shorter, we note that it includes the usage of some additional properties, presented in (\ref{expprop}). Indeed, we need to make sure that there is such an element $g$ in $G$ that the relation $g^{ab}=\Phi^{ab}(x,y)$ holds. So, we believe that the proof of Lemma \ref{mainlem}, written above, is more elegant and instructive in the context of covariant Taylor expansions.

Let us give some comments about the Yang--Mills effective action. In the paper, we have used the definition without using the path ordered formulation, because the last construction contains a lot of mathematical questions not related to the main topic. Due to this, we did not face the discussion of the measure properties and its existence. Also, we study the effective action itself, without any counterterms, because we are interested in the properties of "pure" effective action before applying the renormalization procedure. Of course, the main proposition $B\to f(\,\cdot-z,z)$ can be expanded on the counterterms, because, according to the general theory, they also have the covariance.

\paragraph{Acknowledgements.}
This research is supported by the Ministry of Science and Higher Education of the Russian Federation, agreement  075-15-2022-289, and by the "BASIS" foundation grant "Young Russian Mathematics".

\end{document}